\newtheorem{mylem}{Lemma}
\newtheorem{mypro}{Proposition}
\begin{document}
%
\title{A Low Complexity Detection Algorithm for SCMA}

\author{\IEEEauthorblockN{Chenchen Zhang}
\IEEEauthorblockA{Department of Computer\\Science and Engineering\\Shanghai Jiao Tong University\\
China \\Email:kevinzhangcc@sjtu.edu.cn}

\and
\IEEEauthorblockN{Yuan Luo}
\IEEEauthorblockA{Department of Computer\\Science and Engineering\\Shanghai Jiao Tong University\\
China\\
Email:yuanluo@sjtu.edu.cn}

\and
\IEEEauthorblockN{Yan Chen}
\IEEEauthorblockA{Huawei Technologies, Co.Ltd., Shanghai \\China\\
Email:bigbird.chenyan@huawei.com}}

\maketitle

\begin{abstract}
Sparse code multiple access (SCMA) is a new multiple access technique which supports massive connectivity. Compared with the current Long Term Evolution (LTE) system, it enables the overloading of active users on limited orthogonal resources  and thus meets the requirement of the fifth generation (5G) wireless networks. However, the computation complexity of existing detection algorithms increases exponentially with $d_f$ (the degree of the resource nodes). Although the codebooks are designed to have low density, the detection still takes considerable time. The parameter $d_f$ must be designed to be very small, which largely limits the choice of codebooks. In this paper, a new detection algorithm is proposed by discretizing the probability distribution functions (PDFs) in the layer nodes (variable nodes). Given $M$ as the size of one codebook, the detection complexity of each resource node (function node) is reduced from $O(d_f M^{d_f})$ to $O(d_f^3 \ln (d_f))$. Its detection accuracy can quickly approach that of the previous detection algorithms with the decrease of sampling interval in discretization.
\end{abstract}


%
\IEEEpeerreviewmaketitle

\section{Introduction}
The upcoming 5G wireless networks are supposed to support at least 100 billion devices, for which one fundamental requirement is the capability for massive connectivity \cite{hwWhitepaper}. Since the spectrum resource is limited, it is hard for current LTE systems to handle so many active users simultaneously. Hoshyar et al. have proposed an overloaded system called low-density signature (LDS) \cite{LDS} which can serve more active users than the number of orthogonal resources. In LDS, each user's bits will spread over the orthogonal resources according to his signature. To keep the detection complexity feasible, the signature should contain only a few number of nonzero elements. Based on LDS, SCMA \cite{SCMA} was proposed with extra coding gain, which results in better performance than LDS. With its technical advantage, SCMA is very likely to play an important role in 5G wireless networks.

SCMA prefers multi-dimensional codebooks rather than signatures to non-orthogonally spread the transmitted bits. The access points for users in SCMA are called layers and each layer has a codebook. When a user transmits some bits on one layer, the bits are mapped into multi-dimensional complex codewords according to the corresponding codebook. Different layers can work simultaneously and their codewords are superposed in the time domain. Comparing with the QAM modulation in LDS, the multi-dimensional modulation in SCMA is more efficient to utilize the constellation, which brings coding gains \cite{codingGain}.

The performance of SCMA is mainly determined by the codebooks \cite{cddesign2} and the detection algorithm. The complexity of the detection algorithm is very important because it not only determines the delay in detection but also affects the choice of codebooks. Until now, the most efficient detection method is the message passing algorithm (MPA) \cite{SCMA} which takes advantage of the low density of the codebooks. By using MPA, the receiver can perform a maximum a posterior (MAP) detection \cite{sumProd} with computation complexity $O(d_f M^{d_f})$ per resource node. There are already some optimizations on the detection \cite{LDS, simplifiedLLR}, but the complexity is still exponential.

It may seem hard to reduce the order of detection complexity. In each iteration, the message from one resource node to one layer node is computed depending on the messages from the rest $d_f-1$ layer nodes which are connected to that resource node. As each layer has $M$ possible values, there are $M^{d_f-1}$ combinations need to be considered. In the previous works, the optimizations mainly focus on how to deal with each combination quickly. For example, the log-likelihood ratio (LLR) \cite{LDS, llr2} replaces the multiplication operations by addition operation.

In this paper, the update of messages is investigated from a new perspective. Actually, each layer node is not treated as $M$ candidate values but a random variable on complex field. The PDFs of the $d_f-1$ random variables are discretized and convolved to get the new message. As a result, the complexity to update one message becomes the complexity of the 2-dimensional fast fourier transform (2-D FFT) algorithm, which is $O(d_f^2 \ln (d_f))$ as the involved region grows linearly with $d_f$. When the real part in the system is independent of the imaginary part, the complexity order to update one message is further reduced to $d_f \ln (d_f)$, and the original MPA is reduced to $M^{d_f/2}$ \cite{codebook}. Note that the increase of codebook size $M$ won't largely affect the time consumption in the proposed method. For sake of simplicity, hereinafter we use discretized message passing algorithm (DMPA) to denote the  proposed algorithm.

This paper is organized as follows. Section \ref{II} gives the motivation of this work and the SCMA system model. Then the DMPA is proposed in Section III. Firstly the original MPA in SCMA is briefly described. Secondly it is compared with the MPA in LDPC codes and demonstrated in another perspective. Thirdly the detailed procedure of DMPA is presented based on the demonstration. In Section IV, we give the theoretical analysis for DMPA on computation complexity and detection accuracy, which shows the advantages of the new method. The theoretical analysis is confirmed by simulations in Section V. Finally Section VI concludes this paper. We use $x$, $\textbf{x}$ and $\mathcal{X}$ to represent a scalar, a vector and a set, respectively.

\section{Preliminaries}\label{II}

\subsection{Motivation}\label{II1}
Comparing with orthogonal multiple access techniques, the non-orthogonal techniques can support much more active users simultaneously but always need detection to separate different users' messages. Because of the increasing requirement for massive connectivity and limited spectrum resources, SCMA, a typical non-orthogonal technique, is very likely to become one key technique in the future 5G systems. However, the high detection complexity will increase the latency and limit the design of codebooks, which may be an important bottleneck.

SCMA and LDPC codes are similar in factor graph representation and the corresponding MPA detection. In each resource node (or check node), to update the message to one variable node they both need to use the messages from the rest connected variable nodes. However, the complexity of MPA in LDPC codes is far more less than that in SCMA. The reason is that in LDPC codes the new message can be computed by FFT (See Subsection III-B in \cite{ldpc}) while it is near brute-force in SCMA. Inspired by LDPC codes, we believe that there should be some methods far more efficient than the exponential time complexity method. The problem is that the method in LDPC codes cannot be directly applied on SCMA as the variable nodes in \cite{ldpc} only have two possible values and the operations are all on $\mathbb{F}_2$ field. This paper solves the problem by discretizing the PDFs of variable nodes, which largely reduces the detection complexity of SCMA.

\subsection{System Model}\label{II2}
An SCMA system consists of its encoder, channel and receiver. The transmitted bits are encoded into SCMA codewords in encoder and multiplexed in the channel. The receiver near-optimally detects the SCMA codewords by MPA over the corresponding factor graph.

The SCMA encoder contains $J$ separate layers and each layer has a codebook $\mathcal{X}_j$. We assume every codebook contains $M$ complex codewords, i.e., $\mathcal{X}_j = \{\textbf{x}_{j1},\ldots,\textbf{x}_{jM}\}$. Let $\{x_i\}_{i=1}^I$ denote the vector $(x_1,\ldots,x_I)^T$, then the $m$th codeword can be written as $\textbf{x}_{jm} = \{x_{kjm}\}_{k=1}^K$. Although the length of the codewords is $K$, each codebook has $K-N$ positions on which its codewords are all zero. In each layer, every $\log_2 M$ bits will be mapped into one codeword. The $J$ layers can work simultaneously, which produces $J$ codewords.

Corresponding with the length of the codewords, the SCMA channel has $K$ orthogonal resources such as OFDMA tones or MIMO spatial layers. The $J$ codewords from the layers are multiplexed over the $K$ resources and the $j$th codeword will be affected by the channel vector $\textbf{h}_j=\{h_{kj}\}_{k=1}^K$. Let $diag(\textbf{h}_j)$ denote the diagonal matrix where its $n$th diagonal element is the $n$th element of $\textbf{h}_j$. If the channel is a complex additive white Gaussian noise (AWGN) channel, the received signal $\textbf{y}$ can be expressed as
\begin{equation}\label{scma}
  \textbf{y}=\sum_{j=1}^J diag(\textbf{h}_j)\textbf{x}_j+\textbf{n},
\end{equation}
where $\textbf{x}_j=(x_{1j},\ldots,x_{Kj})^T$ is the $j$th codeword and $\textbf{n}\sim\mathcal{CN}(0,N_0 \textbf{I})$ is the complex Gaussian noise. The overloading factor is defined as $\lambda := J/K$.

\begin{figure}[htbp]
  \centering
  \includegraphics[width=0.4\textwidth]{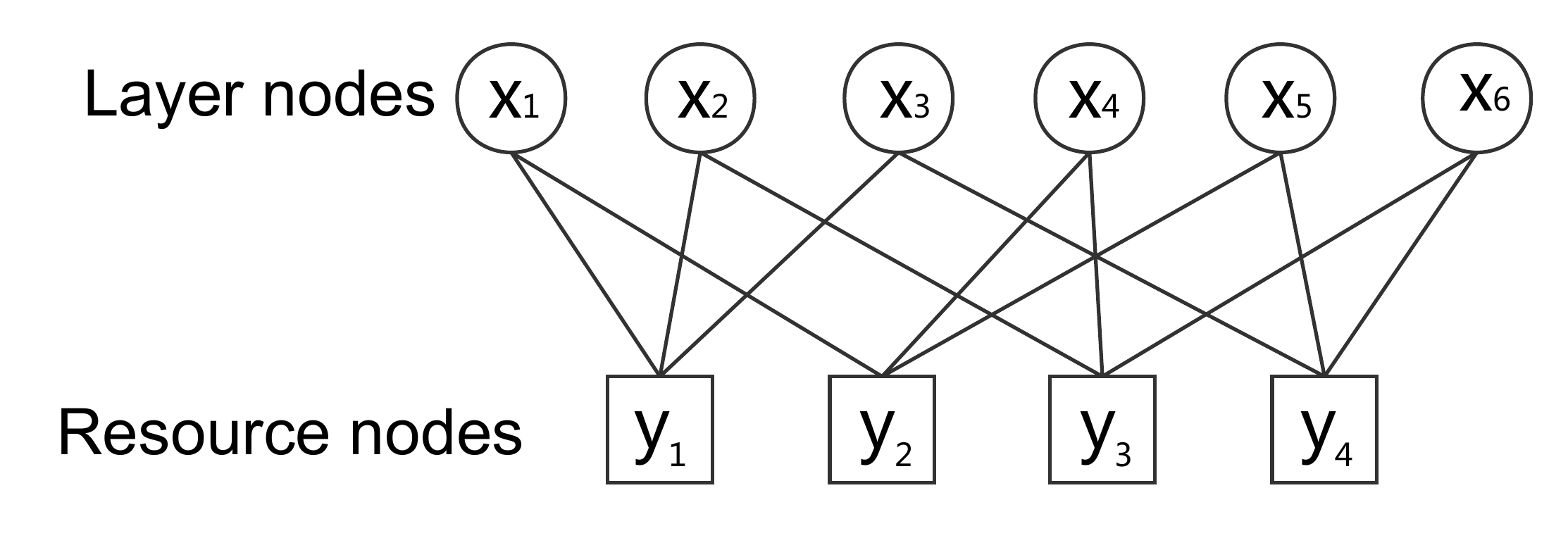}
  \centering
  \caption{Factor graph representation of an SCMA system with $J=6$, $K=4$, $N=2$ and $d_f=3$}
  \label{scmaFG}
\end{figure}

The receiver detects the transmitted codewords according to the received signal $\textbf{y}$ and the channel knowledge. Up to now, the most efficient detector is the MPA detector which is based on factor graph. As is shown in Fig. \ref{scmaFG}, the factor graph is formed by the relation between layers and resources. In each codebook, the codewords are only nonzero on $N$ positions. It means that bits from each layer will only spread in $N$ resources. In this paper, we assume the $K$ resource nodes in the factor graph have the same degree $d_f$. Under this assumption the update of each message from resource nodes has the complexity order $M^{d_f}$. Since the degree of resource nodes is $d_f$, the complexity in each resource node is $O(d_f M^{d_f})$.

\section{Discretized Message Passing Algorithm}\label{III}
This section proposes a new detection method DMPA which reduces the complexity per resource node from $O(d_f M^{d_f})$ to $O(d_f^3 \ln (d_f))$. Subsection \ref{III1} briefly describes the original MPA detector for reference. Then it is compared with MPA in LDPC codes in Subsection \ref{III2}. Finally Subsection \ref{III3} presents the basic idea and detailed procedure of DMPA.

\subsection{Original MPA for SCMA}\label{III1}
There are already some optimizations on the MPA detector for SCMA. But as the original one can better reflect its basic idea, we choose the original MPA (See Algorithm 1 in \cite{simplifiedLLR}) as the referenced method.

In SCMA, each codebook contains $M$ codewords and everyone has a probability to be the transmitted codeword. The probability distribution is computed as the messages in MPA according to the received signal $\textbf{y}$, the channel knowledge, and the factor graph. Actually, the structures of factor graphs represent
some constraints and the messages are updated based on the constraints. For example, in the system depicted in Fig. \ref{scmaFG}, the three edges connected with the first resource node implies the constraint $h_{11}x_1+h_{21}x_2+h_{31}x_3+n=y_1$ where $h_{i1}$ is an element of channel vectors and $x_i$ is the first element of the coresponding codeword which has $M$ possible values $\{x_{im}\}_{m=1}^M$. In each iteration the MPA detector will deduce the probability distribution of $x_1$ according to $x_2, x_3$ and the value of $y_1$. Let $[N]$ denote the set $\{1,2,\ldots,N\}$, the detailed procedure is shown in Algorithm \ref{mpaProcedure}.

\begin{algorithm}
\caption{MPA detection for SCMA}
\label{mpaProcedure}
\begin{algorithmic}
\REQUIRE $\textbf{y}$, $\{\textbf{h}_j\}_{j=1}^{J}$, $\{\mathcal{X}_j\}_{j=1}^{J}$, $N_0$
\ENSURE
  $V_{j\rightarrow k}^{(0)}(\textbf{x}_{jm})=1/M, ~~(j\!\in\![J],k\!\in\![K],m\!\in\![M])$
\FOR {$t=1$ to itrNum}
%
%
\STATE\COMMENT {\textbf{update messages from resource nodes}}
\FORALL{$j\in [J],k\in [K],m\in [M]$ such that Edge(j,k) exists in factor graph}
\STATE
\begin{align}\label{FNupdate}
  U_{k\rightarrow j}^{(t)}(\textbf{x}_{jm})=\sum_{\textbf{c}\in com}\frac{1}{\pi N_0} exp[-\frac{1}{N_0} |y_k-h_{kj}x_{kjm} &  \notag \\
  -\sum_{i\in \partial k\backslash j} h_{ki}c_{ik}|^2] \prod_{i\in \partial k\backslash j} V_{i\rightarrow k}^{(t-1)}(\textbf{c}_i). &
\end{align}

$com = \mathcal{X}_{i_1} \times \ldots \times \mathcal{X}_{i_{d_f-1}}$ is the Cartesian Product of the codebooks of the rest $d_f-1$ variable nodes connected to the $k$th function node, i.e., $i_{\ast} \in\partial k\backslash j$. $\textbf{c}=(\textbf{c}_{i_1},\ldots, \textbf{c}_{i_{d_f-1}})$ is one element of $com$ and $\textbf{c}_i=\{c_{ik}\}_{k=1}^K$.
\ENDFOR
\STATE\COMMENT {\textbf{update messages from layer nodes}}
\FORALL{$j\in [J],k\in [K],m\in [M]$ such that Edge(j,k) exists in factor graph}
\STATE
\begin{equation}\label{VNupdate}
  V_{j\rightarrow k}^{(t)}(\textbf{x}_{jm})=\prod_{l\in\partial j\backslash k} U_{l\rightarrow j}^{(t)}(\textbf{x}_{jm}).
\end{equation}
\ENDFOR

Normalize the probabilities to keep numerical stable:
\FORALL{$j\in [J],k\in [K],m\in [M]$ such that Edge(j,k) exists in factor graph}
\STATE
\begin{equation}\label{VNnormalize}
V_{j\rightarrow k}^{(t)}(\textbf{x}_{jm})=V_{j\rightarrow k}^{(t)}(\textbf{x}_{jm})/\sum_{m'=1}^M V_{j\rightarrow k}^{(t)}(\textbf{x}_{jm'})
\end{equation}
\ENDFOR
\ENDFOR
\STATE\COMMENT {\textbf{make decision after some iterations}}
\FOR {$j=1,\ldots,J; m=1,\ldots, M$}
\STATE
\begin{equation}
  V_j(\textbf{x}_{jm})=\prod_{k\in\partial j}U_{k\rightarrow j}^{(itrNum)}(\textbf{x}_{jm}).
\end{equation}
\ENDFOR
\STATE
Finally, the one in $\{\textbf{x}_{jm}\}_{m=1}^M$ which maximizes $V_j(\cdot)$ is regarded as the transmitted codeword in the $j$th layer.
\end{algorithmic}
\end{algorithm}

\subsection{Comparison with MPA decoding of LDPC codes}\label{III2}
This subsection first analyzes the underlying idea of Algorithm \ref{mpaProcedure}. Then the idea is compared with the low compelxity MPA in binary LDPC codes, which inspires us to propose the DMPA in Subsection \ref{III3}.

In Algorithm \ref{mpaProcedure}, the dominant cost operation is to compute Equation (\ref{FNupdate}), which calculates the probability distribution of the $j$th layer node according to the $k$th resource node and the rest $d_f-1$ adjacent layer nodes. In Equation (\ref{FNupdate}), the first part $\frac{1}{\pi N_0} exp[\cdot]$ computes the probability that the transmitted codeword in $j$th layer is $\textbf{x}_{jm}$ in the condition that the rest $d_f-1$ layers are according to $\textbf{c}$, i.e., $P(\textbf{x}_{jm}|\textbf{c})$. The last part $V_{i\rightarrow k}^{(t-1)}(\textbf{c}_i)$ is the probability that the codeword from $i$th layer is $\textbf{c}_i$, i.e., $P(\textbf{c}_i)$. The product of the two parts is the joint probability $P(\textbf{x}_{jm}, \textbf{c})$. So the whole equation is a traversal on the set $com$ to compute $P(\textbf{x}_{jm})$.

The traversal is near brute-force, which causes exponential complexity. The decoding of LDPC codes computes probability distributions in a much smarter way. In binary LDPC codes, the constraint $x_1+x_2+x_3=0$ can lead to $x_1=x_2+x_3$, which means the PDF of $x_1$ is the convolution of PDFs of $x_2$ and $x_3$. As $x_i$ only has two possible values $0$ and $1$, the convolution of $x_1$ can be calculated by Fourier transform on $\mathbb{F}_2$ field \cite{ldpc}. In SCMA, the constraint is like $h_{11}x_1+h_{21}x_2+h_{11}x_3+n=y_1$, where $n$ is the noise and $y_1$ is a constant. It seems that the probability distribution of $x_1$ can be obtained by computing convolution in a similar way.
However, if $x_i$ is seen as a random variable whose sampling space consists of $M$ individual numbers, the Fourier transform cannot be directly applied to the convolution since the numbers from different layers are even not aligned. Our key point to solve the problem is to regard the layer nodes as random variables on the whole complex field and do discretization (See Subsection \ref{III3}).

\subsection{DMPA}\label{III3}
This subsection gives the procedure to solve Equation (\ref{FNupdate}) efficiently. Firstly, (\ref{FNupdate}) is proved to be equivalent to computing a convolution. Secondly, the convolution is solved by discretization and FFT. Finally, the DMPA on real field is extended to the complex field.

 To better demonstrate our basic idea, we assume the elements of channel vectors are all $1$ and the real part of codebooks is designed independent with the imaginary part. Under this assumption, the real part and the imaginary part in the SCMA system are independent. Therefore we further assume the random variables are all on real field \cite{codebook}. If $f(t)$ is a function, let $f$ denote its sampling sequence and the $f[n]$ is the $n$th element in the sequence.

\textbf{Equation (\ref{FNupdate}) is equivalent to the convolution of $d_f$ PDFs at the point $x_{kjm}$}. For simplicity, assume the constraint to update message is $x_1+x_2+\ldots+x_{d_f}+n=y$, and the corresponding PDFs are $f_1(t),\ldots,f_{d_f}(t),\eta(t)$. The PDF $f_i(t)$ is all zero except $M$ impulses, i.e.,
\begin{equation}\label{varpdf}
  f_i(t)=\sum_{m=1}^M P(x_i=x_{im})\delta(t-x_{im}).
\end{equation}
 Note that the convolution of the impulse $\delta(t-a)$ with any function $f(t)$ is a shift on $f(t)$, i.e., $f(t-a)$.
 Let $g(t)$ denote the PDF of $x_2+\ldots+x_{d_f}+n$. Then it follows that
 \begin{align}\label{pdfOFg1}
   (f_2*\ldots*f_{d_f})(t)&= \notag \\
   \sum_{m_2,\ldots,m_{d_f}  \in [M]}& \left[\delta(t-\sum_{i=2}^{d_f}x_{im_i}) \prod_{i=2}^{d_f}P(x_i=x_{im_i})\right],
 \end{align}
 and
 \begin{align}\label{pdfOFg2}
   g(t)=&(f_2*\ldots*f_{d_f})(t)*\eta(t)\notag\\
   =& \sum_{m_2,\ldots,m_{d_f} \in [M]} \left[\eta(t-\sum_{i=2}^{d_f}x_{im_i}) \prod_{i=2}^{d_f}P(x_i=x_{im_i})\right].
 \end{align}
 Equation (\ref{pdfOFg1}), (\ref{pdfOFg2}) shows that $(f_2*\ldots*f_{d_f})(t)$ is the summation of $M^{d_f-1}$ impulses and $g(t)$ is the superposition of $M^{d_f-1}$ noise PDFs.
 \begin{figure}[htbp]
  \centering
  \includegraphics[width=0.5\textwidth]{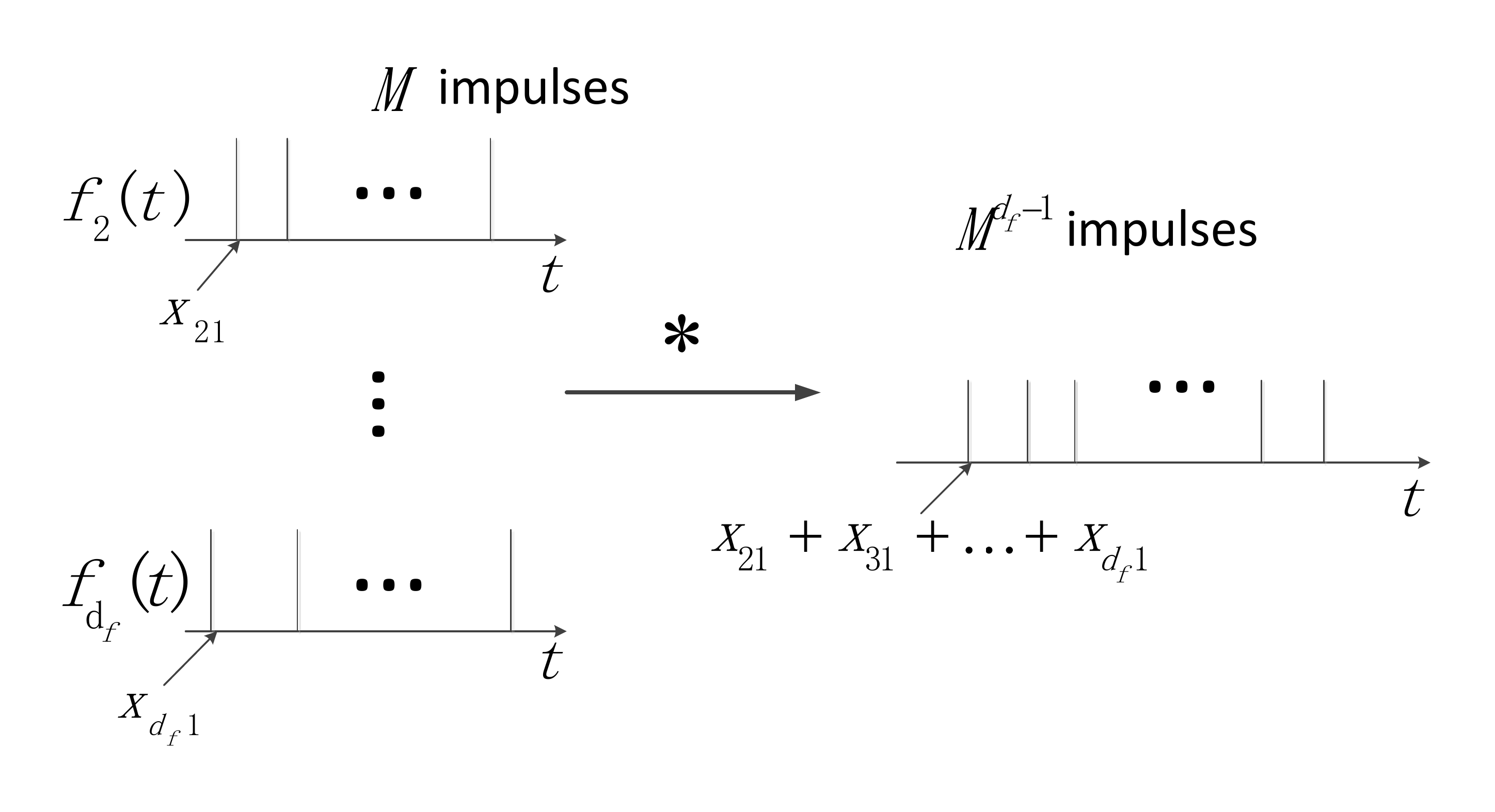}
  \centering
  \caption{The convolution of impulses}
  \label{conv1}
\end{figure}

Fig. \ref{conv1} visually describes the convolution in (\ref{pdfOFg1}). Its complexity can be largely reduced under some conditions. Suppose the impulses in the left are all integral multiples of $0.01$, then the $M^{d_f-1}$ impulses in the right will overlap on the integral multiples of $0.01$ within a interval whose length increases linearly with $d_f-1$. The DMPA first shifts impulses to the sampling positions by discretization, then it only needs to deal with the sampling points in the interval instead of the $M^{d_f-1}$ impulses. Since the number of sampling points grows linearly with $d_f-1$, the complexity is largely reduced.
 \begin{figure}[htbp]
  \centering
  \includegraphics[width=0.5\textwidth]{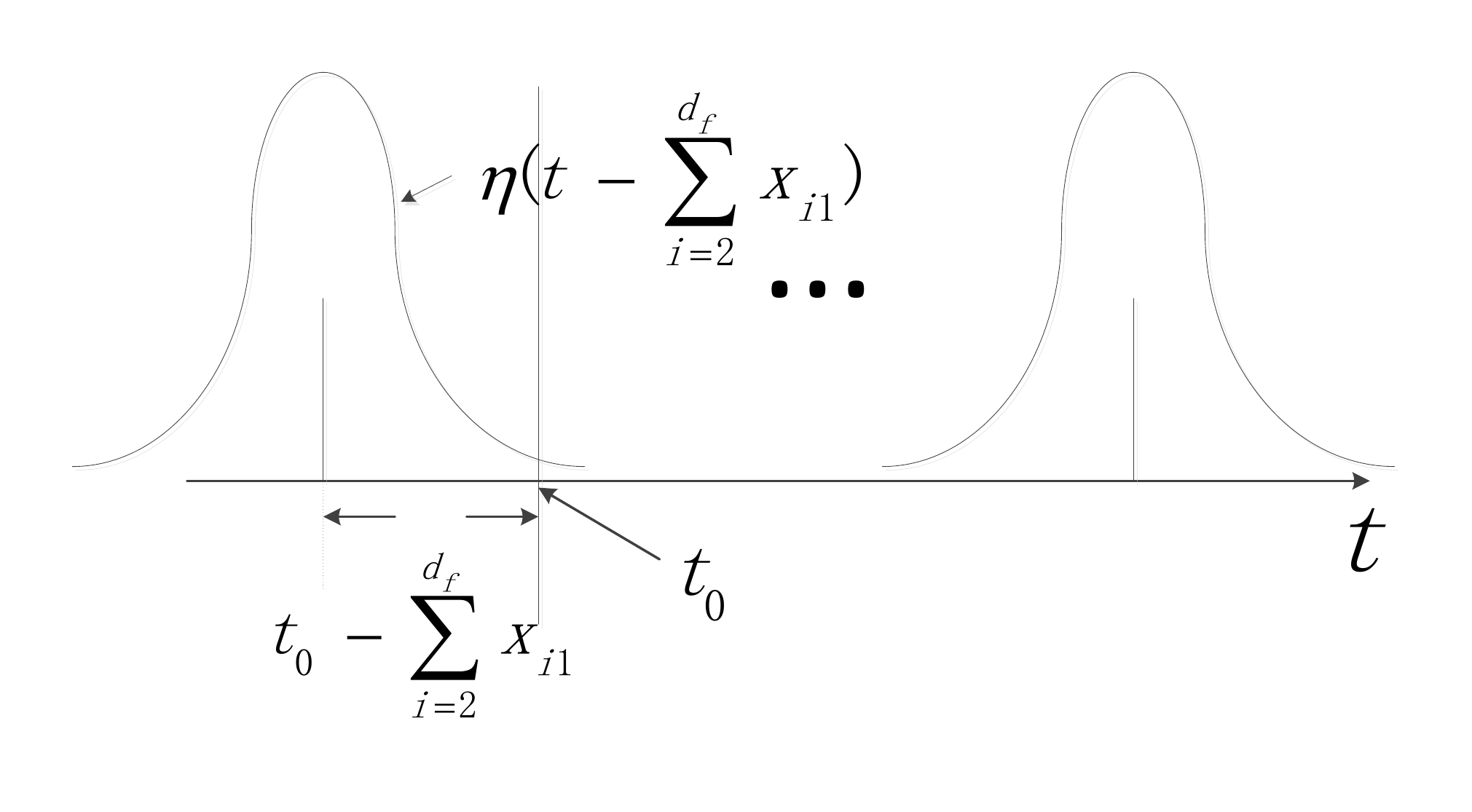}
  \centering
  \caption{The convolution of impulses and noise}
  \label{conv2}
\end{figure}

In Equation (\ref{pdfOFg2}), ``$m_2,\ldots,m_{d_f} \in [M]$" is actually a traversal on the set $com$. The $\eta(t-\sum_{i=2}^{d_f}x_{im_i})$ and $\prod_{i=2}^{d_f}P(x_i=x_{im_i})$ are corresponding with the $\frac{1}{\pi N_0} exp[\cdot]$ and $\prod_{i\in \partial k\backslash j} V_{i\rightarrow k} ^{(t-1)} (\textbf{c}_i)$ in (\ref{FNupdate}), respectively. When the $f_1(x)$ is updated according to the constraint, it can be expressed as
 \begin{align}\label{eqv}
   f_1(x_{k1m})=& g(y_k-x_{k1m}) \notag \\
   =&\!\sum_{\textbf{c}\in com}\!\eta(y_k\!-\!x_{k1m}\!-\!\sum_{i\in \partial k\backslash 1}\!c_{ik})\cdot\!\prod_{i\in \partial k\backslash 1}\!V_{1\rightarrow k} ^{(t-1)}(\textbf{c}_i).
 \end{align}
  Equation (\ref{eqv}) is the same with Equation (\ref{FNupdate}) with all $1$ channel vectors, and it can be easily extended to the cases with general channel vectors.  Therefore the Equation (\ref{FNupdate}) is actually to compute the convolution in $x_{kjm}$, i.e., $U_{k\rightarrow j}^{(t)}(\textbf{x}_{jm})=f_j(x_{kjm})$.

 \textbf{The convolution can be efficiently computed by discretization and FFT}. Still consider the constraint $x_1+x_2+\ldots+x_{d_f}+n=y$.  Suppose the possible values of $x_j$ ($j\in[d_f]$) are all in the interval $[-wid, wid]$ and $\eta(t)$ is negligible outside $[-nWid, nWid]$. Let $w$ denote the precision of discretization which is an exact division of $wid$ and $nWid$. The discretization for $f_i(x)$ and $\eta(x)$ goes as follows:
\begin{equation}\label{disc1}
    f_i[n]=\int_{-wid+(n-\frac{1}{2})w}^{-wid+(n+\frac{1}{2})w} f_i(x) dx,
\end{equation}
where $n=\{0,1,\ldots,(2Wid)/w\}$, and
\begin{equation}\label{disc2}
    \eta[n]=\eta(-nWid+n\cdot w),
\end{equation}
where $n=\{0,1,\ldots,(2nWid)/w\}$. Equation (\ref{disc1}) means that the integral of each impulse is distributed to the nearest sampling point, and Equation (\ref{disc2}) is simply sampling on the PDF of noise. After the discretization, the impulses in Fig. \ref{conv1} will be all on the sampling positions. This step brings detection accuracy loss which is analyzed in Subsection \ref{IV2}.

The linear discrete convolution of two discrete sequence, such as $f_1$ and $f_2$, is defined as
\begin{equation}
  (f_1*f_2)[n]=\sum_{k=-\infty}^{\infty}f_1[k]f_2[n-k],
\end{equation}
where $n=\{0,1,\ldots,(4wid)/w\}$ and the excess index gets $0$. In linear discrete convolution, every non-zero bit of $f_1$ will obtain a shift of $f_2$ and the whole result $f_1*f_2$ is actually the superposition of several shifted $f_2$, which is a good approximation of the continuous convolution with impulses. The accuracy of this approximation is analyzed in Subsection \ref{IV2}. It is obvious that the linear discrete convolution cannot be computed by FFT directly since the length of $f_1,f_2$ is different with the length of $f_1*f_2$. However, the linear discrete convolution is easy to compute using circular discrete convolution, which can be computed by FFT. Let $f_2'$ denote the periodic summation of $f_2$, i.e.,
\begin{align}
   f_2'[n] \overset{\text{def}}{=} & \sum_{p=-\infty}^{\infty} f_2[n+p\cdot length(f_2)] \notag \\
   =&f_2[n\mod length(f_2)].
\end{align}
Then the circular discrete convolution of $f_1$ and $f_2$ can be expressed as
\begin{equation}
  (f_1*f_2')[n]=\sum_{k=-\infty}^{\infty}\left( f_1[k]\sum_{p=-\infty}^{\infty} f_2[n\!-\!k\!+\!p\!\cdot\!length(f_2)]\right),
\end{equation}
where $n=\{0,1,\ldots,(2wid)/w\}$. By padding $(2wid)/w$ zeros to $f_1$ and $f_2$, the result of circular convolution is the same with the linear convolution.
\begin{mylem}[Circular convolution theorem]
  The circular convolution of two finite sequences with same length can be obtained as the inverse Fourier transform of the Hadamard product of the individual Fourier transforms.
\end{mylem}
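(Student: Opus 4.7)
The plan is to verify the identity by direct computation on the discrete Fourier transform (DFT) side. First I would fix notation: for a length-$N$ sequence $f$, define the $N$-point DFT as $\hat{f}[\ell] = \sum_{n=0}^{N-1} f[n]\,\omega^{n\ell}$ with $\omega = e^{-2\pi i / N}$, and take the circular convolution to be $(f * g)[n] = \sum_{k=0}^{N-1} f[k]\,g[(n-k) \bmod N]$, consistent with the definition of $f_2'$ given just before the lemma. The goal is then to establish $\widehat{f * g}[\ell] = \hat{f}[\ell]\,\hat{g}[\ell]$ for every $\ell$, after which applying the inverse DFT yields the stated identity.

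The main computation is to expand the DFT of the convolution, swap the two finite sums, and reindex. I would start from
\begin{equation*}
\widehat{f * g}[\ell] = \sum_{n=0}^{N-1}\sum_{k=0}^{N-1} f[k]\,g[(n-k) \bmod N]\,\omega^{n\ell},
\end{equation*}
pull the $k$-sum out, and, for each fixed $k$, substitute $m = (n-k) \bmod N$ in the inner sum, noting that $m$ ranges over $\{0,\ldots,N-1\}$ exactly once as $n$ does.

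The one step that needs care, and which I expect to be the only real obstacle, is handling the wrap-around in the exponent: since $n \equiv m+k \pmod{N}$ and $\omega^N = 1$, one has $\omega^{n\ell} = \omega^{(m+k)\ell} = \omega^{m\ell}\,\omega^{k\ell}$, so the mod-$N$ reduction disappears from the phase factor. With this the double sum factors cleanly as $\hat{f}[\ell]\,\hat{g}[\ell]$. Applying the inverse DFT and using its invertibility on length-$N$ sequences gives $(f * g) = \mathcal{F}^{-1}\!\big(\hat{f} \odot \hat{g}\big)$, which is precisely the claim. All sums are finite, so no convergence issues arise, and the proof is entirely self-contained.
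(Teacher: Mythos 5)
Your proof is correct: it is the standard argument for the discrete circular convolution theorem (expand the DFT of the convolution, swap the finite sums, reindex modulo $N$, and use $\omega^{N}=1$ to split the phase factor), and the key wrap-around step you flag is handled properly. The paper itself states this lemma as a known result without proof, so your self-contained derivation is entirely consistent with, and fills in, what the paper takes for granted.
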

\noindent According to the circular convolution theorem, the sampling sequence of $g(t)$ can be approximately computed by
\begin{equation}\label{convAll}
  g=\mathcal{F}^{-1}[\mathcal{F}(pad(f_2))\bullet \cdots \bullet \mathcal{F}(pad(f_{d_f}))\bullet \mathcal{F}(pad(\eta))],
\end{equation}
where $\bullet$ denote the Hadamard product. The function value $g(t_0)$ is regarded as the value of the nearest sampling point, i.e.,
\begin{equation}\label{discToFunc}
  g(t_0)=g[round((t_0+wid(df-1)+nWid)/w)],
\end{equation}
where $round$ is the matlab function to get the nearest integer.

The previous two paragraphs shows that the Equation (\ref{FNupdate}) can be solved approximately by FFT. Since the complexity of other parts in Algorithm \ref{mpaProcedure} is negligible compared with the Equation (\ref{FNupdate}), the proposed method largely reduces the whole complexity. The procedure of the DMPA is different with MPA only in the part that updates messages from resource nodes, as shown in Algorithm \ref{dmpaProcedure}.
\begin{algorithm}
\caption{DMPA detection for SCMA}
\label{dmpaProcedure}
\begin{algorithmic}
\STATE\COMMENT {\textbf{The previous part is the same with Algorithm \ref{mpaProcedure}}}
\STATE\COMMENT {\textbf{update messages from resource nodes}}
\FORALL{$j\in [J],k\in [K]$ such that $Edge(j,k)$ exists}
\STATE
\textbf{1}. Get $\{f_{i_m}\}_{m=1}^{d_f-1}(t)$ from $V_{i\rightarrow k}^{(t-1)}(\cdot)$ according to (\ref{varpdf}).
\STATE
\textbf{2}. Discretize $\{f_{i_m}\}_{m=1}^{d_f-1}(t)$ and $\eta(t)$ according to (\ref{disc1}) and (\ref{disc2}).
\STATE
\textbf{3}. Pad zeros to $\{f_{i_m}\}_{m=1}^{d_f-1}$ and $\eta$ to obtain sequences with length $N$ as follows:

%
%
\STATE
\begin{equation}\label{discLen}
  N=pow2(nextpow2(2((d_f-1)wid+nWid)/w+1)),
\end{equation}
where the $pow2$ and $nextpow2$ are matlab functions.
\STATE
\textbf{4}. Get sequence $g$ according to (\ref{convAll}).
\FOR{$m\in[M]$}
\STATE
Compute Equation (\ref{FNupdate}) according to (\ref{discToFunc}):
\begin{equation}\label{appFNupdate}
  U_{k\rightarrow j}^{(t)}(\textbf{x}_{jm})=g(y_k-x_{kjm}).
\end{equation}
\ENDFOR
\ENDFOR
\STATE\COMMENT {\textbf{The following part is the same with Algorithm \ref{mpaProcedure}}}
\end{algorithmic}
\end{algorithm}

The analysis in this subsection is based on the assumption that the random variables are all on real field, but it is easy to generalize them to complex field. Without the assumption, $f_i(t)$ in (\ref{varpdf}) becomes $M$ impulses on complex field and the noise PDF $\eta(t)$ becomes the complex Gaussian distribution in AWGN channels. Their discretization are 2-D matrix which can be handled by the 2-D FFT. Besides, we assumed that the elements of the channel vectors are all $1$. Note that a general system can be seen as having all $1$ channel vectors when $\{\textbf{h}_{j}\bullet\textbf{x}_{jm}\}_{j\in[J],m\in[M]}$ are regarded as the codewords. The DMPA is easy to extended to general cases.

\section{Performance Analysis}\label{IV}
The idea and detailed procedure of DMPA are presented in Section \ref{III}. However, it still needs to demonstrate that whether the discretization works and how many benefits the new method brings. In this section, the complexity of DMPA is analyzed in Subsection \ref{IV1}, which shows its advantage. Then Subsection \ref{IV2} analyzes the detection accuracy loss caused by discretization.
\subsection{Computation Complexity of DMPA}\label{IV1}
In each iteration of the MPA detection, messages are updated and sent along every edge in the factor graph. Equation (\ref{VNupdate}) shows that the complexity to update one message from layer nodes is $O(|\partial j|)$, which is negligible compared with (\ref{FNupdate}) and (\ref{convAll}). So the complexity orders of MPA and DMPA are determined by messages updating in resource nodes. This subsection first presents the complexity orders of MPA and DMPA. Then they are reanalyzed in the condition that the real part is independent with the imaginary part.

In Equation (\ref{FNupdate}), the part $y_k-h_{kj}x_{kjm}-\sum_{i\in \partial k\backslash j} h_{ki}c_{ik}$ and $\prod_{i\in \partial k} V_{i\rightarrow k}^{(t-1)}(\textbf{c}_i)$ are precomputed and stored. Then each item of the summation has complexity $O(1)$ and therefore the whole summation is $O(M^{d_f-1})$. Since the message updating in each edge needs to apply (\ref{FNupdate}) on $M$ codewords and each resource node has $d_f$ edges, the complexity order of MPA in each resource node is $d_f M^{d_f}$.


The complexity order of DMPA is determined by (\ref{convAll}). The length of the sequences in (\ref{convAll}) after zero-padding is computed by (\ref{discLen}) and has the property:
\begin{equation}
  N < 2[2((d_f-1)wid+nWid)/w+1].
\end{equation}
The complexity order of (\ref{convAll}) is the same as the 2-D FFT for $N$-length sequence, i.e., $d_f^2 \ln d_f$. Therefore the whole complexity order of DMPA for the $d_f$ edges in each resource node is $d_f^3 \ln d_f$.

When the real part is independent with the imaginary part in the SCMA system, the two parts can be detected individually, which largely reduces the complexity. In individual detection, the size of codebooks becomes $\sqrt M$ in MPA, so the new complexity is $O(d_f M^{d_f/2})$. Meanwhile the 2-D FFT in DMPA becomes 1-D FFT and the complexity order is reduced to $d_f^2 \ln d_f$. Subsection \ref{III3} has presented one situation in which the two parts are independent. Actually, the independence can be obtained by two requirements:
\begin{enumerate}
  \item  The real part of codebooks is designed independent with the imaginary part. Then the Cartesian Product of the two parts produces the final codebooks.
  \item  Different layers have the same channel vectors.
\end{enumerate}
The second requirement is satisfied when all layers are transmitted from the same transmit point \cite{SCMA}. Therefore in downlink model, the codebooks can be designed to apply individual detection.

\subsection{Detection Accuracy of DMPA}\label{IV2}
Subsection \ref{III3} has demonstrated that the $U_{k\rightarrow j}^{(t)}(\textbf{x}_{jm})$ in (\ref{FNupdate}) is the value of $g(y_k-x_{kjm})$. The discretization and FFT can approximately compute $g(y_k-x_{kjm})$ with low complexity but also brings approximation error, which causes detection accuracy loss. In this subsection, the detection accuracy of DMPA is analyzed by considering the approximation error. Firstly, the error is proved to be $0$ when there is no shift in discretization. Secondly, the upper bounds of error are derived when the length of shift is not $0$ but limited by sampling interval. In previous two steps the variables are assumed to be on real field , and finally the analysis result is extended to complex field.

 \textbf{The sequence $g$ in (\ref{convAll}) is the exact sampling sequence of $g(t)$ when the elements of codewords are all on sampling positions.} The elements are reflected as impulses in the corresponding PDFs $\{f_i(t)\}_{i=1}^{d_f}$. In discretization procedure (\ref{disc1}), the impulses are distributed to the nearest sampling points. It can be seen as shifts on the impulses and hence the resulting sequence $f_i$ is not the real sampling sequence of $f_i(t)$. But when the impulses are exactly on some sampling positions, the $g$ is  not an approximation but the real sampling sequence of $g(t)$.
\begin{mylem}\label{lem2}
  Assume $f_1(t)$ consists of some impulses in $[t_1,t_1+(N-1)w]$, and $f_2(t)$ is an arbitrary function which is negligible outside $[t_2,t_2+(M-1)w]$. $f_1$ and $f_2$ are the discrete sequences generated by discretization in ($\ref{disc1}$) and ($\ref{disc2}$). $w$ is the sampling interval and the impulses are all on sampling positions. Then $f_1*f_2$ is the sampling sequence of $(f_1*f_2)(t)$.
\end{mylem}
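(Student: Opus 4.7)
The plan is to unwind the definitions on the right-hand side until the claim becomes a direct computation: write $f_1(t)$ explicitly as a finite sum of Dirac impulses at sampling positions, compute the continuous convolution $(f_1 * f_2)(t)$ as a weighted sum of shifts of $f_2$, and then sample the result on the grid to match the linear discrete convolution $f_1 * f_2$.

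First I would parameterize the impulses. Since $f_1(t)$ has support in $[t_1, t_1+(N-1)w]$ and every impulse sits on a sampling point, I can write
\begin{equation*}
f_1(t) = \sum_{n=0}^{N-1} a_n\, \delta\!\bigl(t - (t_1 + nw)\bigr)
\end{equation*}
for some weights $a_n$. Then I would apply the discretization rule (\ref{disc1}) to $f_1$: because each impulse is exactly centered on a sampling point, the integral over the half-open cell of width $w$ around that point picks up that impulse alone, so $f_1[n] = a_n$. Likewise (\ref{disc2}) gives $f_2[m] = f_2(t_2 + mw)$.

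Next I would compute the continuous convolution using the sifting property of $\delta$, obtaining
\begin{equation*}
(f_1 * f_2)(t) = \sum_{n=0}^{N-1} a_n\, f_2\!\bigl(t - t_1 - nw\bigr).
\end{equation*}
Sampling this at the grid point $t = t_1 + t_2 + mw$ (the natural base point for the convolution, since the supports of $f_1$ and $f_2$ start at $t_1$ and $t_2$) yields
\begin{equation*}
(f_1 * f_2)(t_1 + t_2 + mw) = \sum_{n=0}^{N-1} a_n\, f_2\!\bigl(t_2 + (m-n)w\bigr) = \sum_{n=0}^{N-1} f_1[n]\, f_2[m-n],
\end{equation*}
which is precisely $(f_1 * f_2)[m]$ by the definition of linear discrete convolution (with out-of-range indices evaluating to zero, matching the negligibility of $f_2$ outside its support).

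The argument is essentially bookkeeping, and the only subtlety I expect is making the index alignment unambiguous: I have to be careful that the base offset $t_1+t_2$ used when sampling the continuous convolution matches the zero-index convention implicit in (\ref{disc1})--(\ref{disc2}), so that the discrete sequence $f_1 * f_2$ really coincides with the values of $(f_1*f_2)(t)$ on the shifted grid rather than being off by one sample. Once that alignment is fixed, the impulses-on-grid hypothesis does all the work: it is exactly what prevents the discretization step (\ref{disc1}) from introducing any shift error, and without it the equality would only be an approximation (which is the setting that Subsection \ref{IV2} goes on to analyze).
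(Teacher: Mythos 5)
Your proposal is correct and follows essentially the same route as the paper's proof: write $f_1$ as a weighted sum of impulses on the grid, note that the discretization just reads off the weights and samples of $f_2$, convolve via the sifting property, and evaluate at $t_1+t_2+mw$ to recover the linear discrete convolution. No gaps; the index-alignment point you flag is handled identically in the paper.
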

\begin{proof}
  The function $f_1(t)$ can be denoted by
  \begin{equation}
    f_1(t)=\sum_{n=0}^{N-1} p_n \delta (t-(t_1+n\cdot w)).
  \end{equation}
  The sequences $f_1$ and $f_2$ are
  \begin{equation}
    f_1[n]=p_n~~ (n=\{0,1,\ldots,N-1\}),
  \end{equation}
  and
  \begin{equation}
    f_2[n]=f_2(t_2+n\cdot w) ~~(n=\{0,1,\ldots,M-1\}).
  \end{equation}
  The convolution of $f_1(t)$ and $f_2(t)$ is
  \begin{equation}
    (f_1*f_2)(t)=\sum_{n=0}^{N-1} p_n f_2(t-(t_1+n\cdot w)).
  \end{equation}
  It follows that
  \begin{align}
    (f_1*f_2)(t_1+t_2+m\cdot w)= & \sum_{n=0}^{m} p_n f_2(t_2+(m-n)\cdot w))\notag \\
    = & \sum_{n=0}^{m} f_1[n]f_2[m-n] \notag \\
    = & (f_1*f_2)[m],
  \end{align}
  where $m=\{0,1,\ldots,N+M-2\}$.
  Therefore $f_1*f_2$ is the sampling sequence of $(f_1*f_2)(t)$.
\end{proof}
According to Lemma \ref{lem2}, $g=f_2*f_3*\ldots*f_{d_f}*\eta$ is the sampling sequence of $g(t)$ when the impulses in $f_i(x)$ are not shifted in discretization.

\textbf{The approximation error of $g(y_k-x_{kjm})$ is limited by the sampling interval.} The previous paragraphs have shown that the approximation error is caused by shifts in discretization. Suppose the sampling interval is $w$, then the impulses in $f_i(t)$ are shifted less than $w/2$. As is shown in Fig. \ref{conv1}, the position of each impulse in the right is the summation of $d_f-1$ positions from the left. Therefore, the $M^{d_f-1}$ noise PDFs in Fig. \ref{conv2} are shifted less than $(d_f-1)w/2$. Suppose the $n$th point in sequence $g$ is corresponding to the position $t_0$ in $g(t)$ and is the nearest sampling point of position $t_0'$. Let $g'[n]$, $g'(t_0')$ respectively denote the approximate solution of $g[n]$ and $g(t_0')$, which are computed by (\ref{convAll}) and (\ref{discToFunc}). It follows that
\begin{equation}
  g[n]=g(t_0)=\sum_{i=1}^{M^{d_f-1}} p_i \eta(t_0-t_i),
\end{equation}
and
\begin{equation}
  g'[n]=\sum_{i=1}^{M^{d_f-1}} p_i \eta(t_0-t_i-\Delta d_i),
\end{equation}
where $p_i, t_i,\Delta d_i$ are the integral, position and shift of the $i$th impulse in $(f_2*\ldots *f_{d_f})(t)$, respectively. Assume the noise is white Gaussian noise with variance $\sigma^2$, then $\eta(t)$ has the property that
\begin{equation}
  \left| \frac{d\eta}{dt}(t) \right|=\left|\frac{-x}{\sigma^3\sqrt{2\pi}} e^{-x^2/2\sigma^2}\right|\leq \frac{1}{\sigma^2\sqrt{2\pi}} e^{-1/2}.
\end{equation}
 The absolute difference of $g[n]$ and $g'[n]$ is
\begin{align}
   \left| g[n]-g'[n]\right|= &  \left|\sum_{i=1}^{M^{d_f-1}} p_i (\eta(t_0-t_i)-\eta(t_0-t_i-\Delta d_i))\right|\notag \\
    \approx & \left|\sum_{i=1}^{M^{d_f-1}} p_i \Delta d_i \frac{d\eta}{dt} (t_0-t_i) \right| \notag \\
    \leq & \left|\sum_{i=1}^{M^{d_f-1}} p_i \Delta d_i\right| \frac{1}{\sigma^2\sqrt{2\pi}} e^{-1/2} \notag \\
    \leq &  \frac{1}{2\sigma^2\sqrt{2\pi}} (d_f-1)w e^{-1/2}.
  \end{align}
  Similarly, the absolute difference of $g(t_0)$ and $g(t_0')$ satisfies
  \begin{align}
   \left| g(t_0)-g(t_0')\right|\leq & \left|\sum_{i=1}^{M^{d_f-1}} p_i (t_0-t_0') \right| \frac{1}{\sigma^2\sqrt{2\pi}} e^{-1/2} \notag \\
    \leq &  \frac{1}{2\sigma^2\sqrt{2\pi}}w e^{-1/2}.
  \end{align}
The DMPA finally returns $g'[n]$ as $g'(t_0')$ in Equation (\ref{discToFunc}), then the approximation error can be upper bounded.
\begin{mypro}[Upper Bound 1]
For an arbitrary $t_0'$, the approximation error satisfies
\begin{align}\label{absErr}
  |g'(t_0')-g(t_0')|\leq & |g'[n]-g[n]|+|g(t_0)-g(t_0')| \notag \\
  \leq & \frac{1}{2\sigma^2\sqrt{2\pi}} d_f w e^{-1/2}.
\end{align}
\end{mypro}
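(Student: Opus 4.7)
The plan is to assemble two error estimates that have already been derived in the paragraph immediately preceding the statement, using nothing more than the triangle inequality. The setup to unpack first is the identification of symbols: $g'(t_0')$ is defined through (\ref{discToFunc}) to equal $g'[n]$, where $n$ is the index of the sampling point nearest to $t_0'$, while by Lemma \ref{lem2} the true value $g[n]$ of the sampling sequence coincides with $g(t_0)$ at the sampling position $t_0$. Introducing the intermediate quantity $g[n]=g(t_0)$ is what makes the cancellation work.

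First, I would write the decomposition
\begin{equation}
g'(t_0')-g(t_0')=\bigl(g'[n]-g[n]\bigr)+\bigl(g(t_0)-g(t_0')\bigr)
\end{equation}
and apply the triangle inequality, which yields the first line of the claimed inequality verbatim. Second, I would substitute the two bounds already derived immediately above the proposition: the convolution-shift bound $|g'[n]-g[n]|\leq \tfrac{(d_f-1)w}{2\sigma^{2}\sqrt{2\pi}}\,e^{-1/2}$, obtained from the per-impulse shift estimate $|\Delta d_i|\leq (d_f-1)w/2$, the normalization $\sum_i p_i = 1$ (since $f_2*\cdots*f_{d_f}$ is still a PDF), and the Gaussian-derivative bound $|\eta'(t)|\leq \tfrac{1}{\sigma^{2}\sqrt{2\pi}}\,e^{-1/2}$; and the rounding bound $|g(t_0)-g(t_0')|\leq \tfrac{w}{2\sigma^{2}\sqrt{2\pi}}\,e^{-1/2}$, coming from $|t_0-t_0'|\leq w/2$ together with the same derivative bound. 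Adding them and collecting the factor $(d_f-1)+1=d_f$ gives the right-hand side $\tfrac{1}{2\sigma^{2}\sqrt{2\pi}}\,d_f\,w\,e^{-1/2}$.

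There is essentially no obstacle here, and that is worth stating plainly: once the two component bounds are in hand, the proposition is a one-line triangle-inequality assembly. The substantive analytic work (the first-order Taylor approximation $\eta(t_0-t_i-\Delta d_i)\approx \eta(t_0-t_i)-\Delta d_i\,\eta'(t_0-t_i)$ and the uniform bound on $|\eta'|$) was carried out in the derivation of the two ingredients. The only care needed in the write-up is to keep straight which quantities are exact and which are the DMPA approximations, so that the intermediate term $g[n]=g(t_0)$ cancels cleanly when the triangle inequality is applied.
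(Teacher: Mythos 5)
Your proposal is correct and follows essentially the same route as the paper: the paper also introduces the intermediate quantity $g[n]=g(t_0)$, applies the triangle inequality to $g'(t_0')-g(t_0')$, and substitutes the two bounds $|g'[n]-g[n]|\leq \tfrac{(d_f-1)w}{2\sigma^{2}\sqrt{2\pi}}e^{-1/2}$ and $|g(t_0)-g(t_0')|\leq \tfrac{w}{2\sigma^{2}\sqrt{2\pi}}e^{-1/2}$ derived just before the proposition. Nothing is missing.
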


It is obvious that $g'(t_0')-g(t_0')$ tends to $0$ with the decrease of $w$. The approximation error can also be estimated in a relative way. The $\eta(t)$ has the property that
\begin{equation}
  \frac{d\eta}{dt}(t_0)=-\frac{t_0}{\sigma^2} \eta(t_0).
\end{equation}
It follows that
\begin{align}
  \frac{|g[n]-g'[n]|}{g(t_0')}\approx & \frac{|g[n]-g'[n]|}{g[n]} \notag\\
  \leq & \frac{\sum_{i=1}^{M^{d_f-1}} p_i \left|\Delta d_i \frac{d\eta}{dt}(t_0-t_i)\right|}{\sum_{i=1}^{M^{d_f-1}} p_i\eta(t_0-t_i)}.
\end{align}
Because $\eta(t)$ is negligible outside $[-nWid, nWid]$, $\eta(t_0-t_i)$ can be seen as $0$ when $|t_0-t_i|>nWid$. It means that
\begin{equation}
  \left|\Delta d_i \frac{d\eta}{dt}(t_0-t_i)\right|\leq \frac{(d_f\!-\!1)w}{2} \cdot \frac{nWid}{\sigma^2} \eta(t_0-t_i).
\end{equation}
Similarly, the relative error $|g(t_0)-g(t_0')|/g(t_0')$ can be upper bounded.
\begin{mypro}[Upper Bound 2]
The total relative error can be estimated as
\begin{equation}\label{relaErr}
  \frac{|g'(t_0')-g(t_0')|}{g(t_0')}\leq \frac{wd_f\cdot nWid}{2\sigma^2}.
\end{equation}
\end{mypro}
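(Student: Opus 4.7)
The plan is to decompose the relative error by the triangle inequality into the same two pieces that appeared in Upper Bound 1: the discretization error $|g[n]-g'[n]|/g(t_0')$ coming from shifting impulses onto the nearest sampling point, plus the lookup error $|g(t_0)-g(t_0')|/g(t_0')$ coming from returning the sample at position $t_0$ in place of $t_0'$. Both pieces will be handled by the same mechanism, namely the identity $\frac{d\eta}{dt}(t)=-\frac{t}{\sigma^2}\eta(t)$, which converts any local shift $\Delta$ on $\eta(t_0-t_i)$ into a factor of order $|\Delta|\cdot|t_0-t_i|/\sigma^2$ times $\eta(t_0-t_i)$ itself, so that the unknown weights $p_i$ cancel between numerator and denominator.

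First I would bound $|g[n]-g'[n]|/g(t_0')$. Using the first-order expansion already set up in the derivation preceding the proposition, the numerator is approximately $\sum_i p_i\,\Delta d_i\,\frac{d\eta}{dt}(t_0-t_i)$ while the denominator is $\sum_i p_i\,\eta(t_0-t_i)$. For every index $i$ with $|t_0-t_i|\le nWid$ the per-term inequality $\bigl|\Delta d_i\,\tfrac{d\eta}{dt}(t_0-t_i)\bigr|\le \tfrac{(d_f-1)w}{2}\cdot\tfrac{nWid}{\sigma^2}\,\eta(t_0-t_i)$ holds, since $|\Delta d_i|\le (d_f-1)w/2$. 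Because this dominating factor is independent of $i$, it pulls out of the sum and the ratio collapses to $(d_f-1)w\cdot nWid/(2\sigma^2)$.

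Next I would treat $|g(t_0)-g(t_0')|/g(t_0')$ by exactly the same argument, with the single rounding offset $t_0-t_0'$ in the role of $\Delta d_i$. Since $t_0$ is the nearest sampling point to $t_0'$, this offset has magnitude at most $w/2$, so the corresponding relative bound is $w\cdot nWid/(2\sigma^2)$. Adding the two contributions by triangle inequality gives $(d_f-1+1)\,w\cdot nWid/(2\sigma^2)=d_f\,w\cdot nWid/(2\sigma^2)$, which is exactly the claimed bound.

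The main subtlety is justifying the per-term inequality for \emph{all} indices, not only those inside the support of $\eta$. Outside $[-nWid,nWid]$ the value $\eta(t_0-t_i)$ is \emph{negligible} by the paper's own convention, and the derivative identity $\tfrac{d\eta}{dt}=-\tfrac{t}{\sigma^2}\eta$ guarantees that $\tfrac{d\eta}{dt}(t_0-t_i)$ is suppressed at the same rate, so the tail indices contribute negligibly to both numerator and denominator and may be absorbed into the $\approx$ inherited from the Taylor step. Apart from this truncation argument, the proof is a mechanical chaining of inequalities strictly parallel to the derivation of Upper Bound 1.
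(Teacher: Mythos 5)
Your proposal is correct and follows essentially the same route as the paper: the triangle-inequality split into the discretization error and the rounding (lookup) error, the identity $\frac{d\eta}{dt}(t)=-\frac{t}{\sigma^2}\eta(t)$ to turn each shift into a relative factor bounded by $\frac{(d_f-1)w}{2}\cdot\frac{nWid}{\sigma^2}$ (resp. $\frac{w}{2}\cdot\frac{nWid}{\sigma^2}$), and summation to obtain $\frac{w d_f\cdot nWid}{2\sigma^2}$. Your explicit remark about the tail indices outside $[-nWid,nWid]$ only makes precise a point the paper treats informally by declaring $\eta(t_0-t_i)$ negligible there.
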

Given noise variance $\sigma^2$£¬ the sampling interval $w$ can be modified such that the approximation error in DMPA is very small compared with the original messages. For example, the $w$ can be set as $0.02\sigma^2/(d_f \cdot n)$, then $|g'(t_0')-g(t_0')| \leq 0.01 g(t_0')$. Note that the upper bounds in (\ref{absErr}) and (\ref{relaErr}) are not tight and generally the $w$ needn't to be so small to ensure accuracy.

The previous derivation can be extended to the complex field. When the impulses are on complex field, the shifts are on both real part and the imaginary part, so the lengths of shifts are less than $\sqrt 2(d_f-1)w/2$. The gradient of PDF of complex white Gaussian noise in (\ref{scma}) is
\begin{equation}
  \nabla \eta(x,y)=(-\frac{2x}{N_0^2\pi} \exp(-\frac{x^2\!+\!y^2}{N_0}), -\frac{2y}{N_0^2\pi} \exp(-\frac{x^2\!+\!y^2}{N_0})).
\end{equation}
It has the properties that
\begin{align}
  |\nabla \eta(x,y)|=&\frac{2}{N_0^2 \pi} \exp(-\frac{x^2+y^2}{N_0}) \sqrt{x^2+y^2} \notag \\
  \leq & \frac{2}{N_0^2 \pi} \sqrt{\frac{N_0}{2e}},
\end{align}
and
\begin{align}
  \frac{|\nabla \eta(x,y)|}{\eta(x,y)}=&\frac{2}{\pi^2 N_0^3} \sqrt{x^2+y^2} \notag \\
  \leq & \frac{2 nWid}{\pi^2 N_0^3},
\end{align}
where $\eta(x,y)$ is negligible when $\sqrt{x^2+y^2}>nWid$. Consequently, the approximation error in complex field satisfies that
\begin{equation}
  |g'(t_0')-g(t_0')|\leq \frac{d_f w}{N_0^2\pi}\sqrt{\frac{N_0}{e}},
\end{equation}
and
\begin{equation}
   \frac{|g'(t_0')-g(t_0')|}{g(t_0')}\leq \frac{\sqrt{2}d_f w\cdot nWid}{\pi^2N_0^3}.
\end{equation}

\section{Numerical Results}\label{V}
In this section, the detection accuracy and time consumption of DMPA are evaluated by software simulation. The results show that the block error rate (BLER) of DMPA with a proper sampling interval $w$ is nearly the same with that of original MPA. Meanwhile, its detection time is significantly less than the original MPA and LLR MPA.
Firstly, some necessary parameters in simulation are introduced in Subsection \ref{V1}. Then the simulation results for accuracy and time consumption are presented in Subsection \ref{V2}.
\subsection{Parameters}\label{V1}
\textbf{Factor Graph}\\
In the model in Fig. \ref{scmaFG}, every layer node has degree $2$ which is corresponding to a combination of two resource nodes. The factor graph in our simulation follows this rule and modifies the parameter $K$. Then it has the properties: $J={K\choose 2}$, $d_f=2J/K=K-1$.

\textbf{Codebooks}\\
In this simulation, the real part and imaginary part are designed independent. The two parts are subjectively chosen from the interval $[-1,1]$ and their Cartesian Product is the final codewords. The codebook size $M$ is set as $16$. Without theoretical optimization, the performance of the codebooks is not good, but it is enough to confirm the main focus of this paper.

\textbf{BLER}\\
The $\log_2^M$ bits from the users are seen as one block. The BLER is the average error rate of the detected blocks.

Besides, the elements of channel vectors are all $1$ and the iteration number in detection is $5$. The noise region parameter $nWid$ is set as $5$. The setting is reasonable because with the $N_0=0.2$ (the largest $N_0$ used in simulation) the noise PDF $\eta(t)$ is smaller than $e^{-50}$ outside the range. The real and imaginary parts are detected independently in this simulation to reduce time cost.
\subsection{Performance Comparison}\label{V2}
Subsection \ref{IV2} has shown that the detection accuracy of DMPA will approach that of MPA with the decrease of $w$. The theoretical result is confirmed by Fig. \ref{berComp}, in which the detection accuracies of MPA and DMPA are nearly the same when $w=0.05$. Note that the detection error rate of DMPA may increase when the noise variance decreases. The reason is that in DMPA the accuracy is affected by not only noise but also the discretization. If the variance is very small while the sampling interval is not small enough, the approximation error in discretization would be large as shown in (\ref{absErr}) and (\ref{relaErr}). As a result, $N_0=0.004$ has a better accuracy than $N_0=0.002$ in DMPA when $w=0.2,0.3$.

\begin{figure}[htbp]
  \centering
  \includegraphics[width=0.4\textwidth]{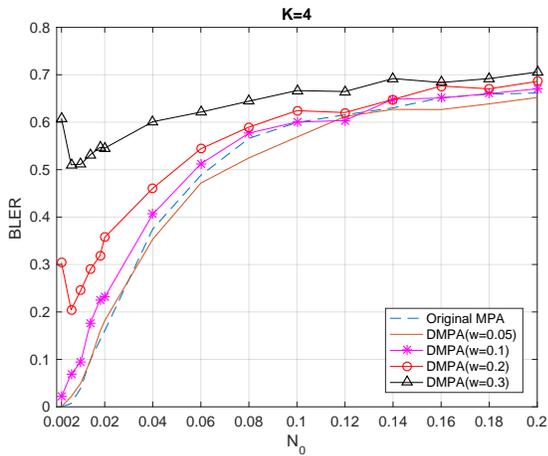}
  \centering
  \caption{Detection accuracy of MPA and DMPA with different sampling intervals}
  \label{berComp}
\end{figure}

The DMPA has been proved to have big advantage in complexity order compared with MPA, but the actual time consumption still need to be evaluated. In this simulation, three detection methods are run 100 times for different $d_f$ and the average time is presented in Table \ref{tCompTab},
\begin{table}[htbp]
\centering
\caption{Time consumption of different detection methods}\label{tCompTab}
\begin{tabular}{|c|c|c|c|c|}
  \hline
   & $d_f=2$ & $d_f=3$ & $d_f=4$ & $d_f=5$ \\
   \hline
  DMPA & 0.003518s & 0.010178s &  0.020442s & 0.037895s \\
  \hline
  LLR MPA & 0.004706s & 0.029063s & 0.232413s & 1.836131s \\
  \hline
  Original MPA & 0.004140s & 0.030238s & 0.241989s & 1.881747s \\
  \hline
\end{tabular}
\end{table}
which is also visually displayed in Fig. \ref{timeComp}. This figure shows that DMPA costs less time than both original MPA and LLR MPA especially when $d_f$ is large. Meanwhile the DMPA ($w=0.05$) has same accuracy with the original MPA, which is better than LLR MPA. In this simulation, the time consumption of LLR MPA is not significantly better than the original one, but in practice the hardware can handle addition operation much faster.
\begin{figure}[htbp]
  \centering
  \includegraphics[width=0.4\textwidth]{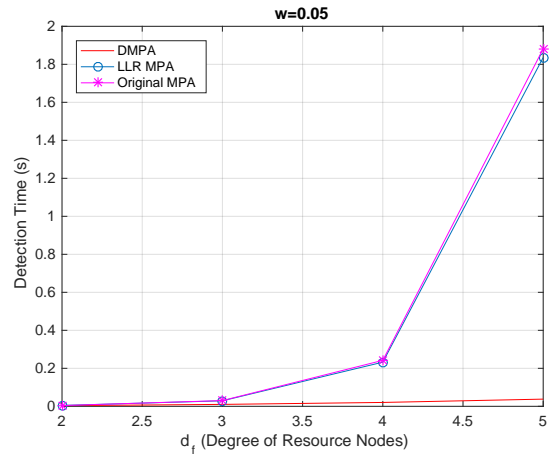}
  \centering
  \caption{Detection time of MPA and DMPA}
  \label{timeComp}
\end{figure}

\section{Conclusion}
In this paper, a low complexity detection algorithm called DMPA is proposed for SCMA. Instead of traversing on the Cartesian Product of $d_f-1$ codebooks, DMPA regards the layer nodes as random variables on complex field and hence applies discretization and FFT to update the messages. This process actually makes the $M^{d_f-1}$ impulses overlaying on the sampling points, which reduces the detection complexity per resource node from $O(d_f M^{d_f})$ to $O(d_f^3 \ln (d_f))$. The discretizaiton will cause approximation error and two upper bounds are derived in Section \ref{IV} to estimate the error. The upper bounds theoretically prove that the detection accuracy of DMPA will approach that of original MPA with the decrease of sampling interval $w$, which is confirmed by simulations. Numerical results show that compared with original MPA and LLR MPA, the DMPA has significant advantage on computation complexity when the accuracy loss is negligible.

%
%
%



%

\end{document}